\documentclass[sigconf,noacm]{acmart}

\setcopyright{none}
\acmDOI{}
\acmISBN{}
\acmConference{}
\acmYear{}
\copyrightyear{}
\usepackage{amsmath,mathtools}
\usepackage{booktabs}
\usepackage{enumitem}
\usepackage{microtype}
\usepackage{graphicx}
\usepackage{xspace}
\usepackage{balance}

\newcommand{\Model}{\mathcal{M}}
\newcommand{\States}{\Omega}
\newcommand{\Pred}{\Phi}
\newcommand{\Sync}{\mathsf{sync}}
\newcommand{\Async}{\mathsf{async}}
\newcommand{\Imm}{\mathsf{imm}}
\newcommand{\Evt}{\mathsf{evt}}
\newcommand{\Reach}{\mathsf{reach}}
\newcommand{\Node}{\mathsf{node}}
\newcommand{\Edge}{\mathsf{edge}}

\theoremstyle{plain}
\newtheorem{proposition}{Proposition}
\renewcommand\footnotetextcopyrightpermission[1]{}

\begin{document}

\title{Stochastic Connectivity as the Foundation of a Runtime Model for Microservice Availability Analysis}

\author{Anatoly A. Krasnovsky}
\affiliation{
  \institution{Innopolis University}
  \city{Innopolis}
  \country{Russia}
}
\affiliation{
  \institution{MB3R Lab}
  \city{Innopolis}
  \country{Russia}}

\author{Anna Maslovskaya}
\affiliation{
  \institution{Innopolis University}
  \city{Innopolis}
  \country{Russia}
}
\renewcommand{\shortauthors}{Krasnovsky, Maslovskaya}

\begin{abstract}
Microservice availability is commonly assessed by fault injection and chaos experiments, but such experiments are costly, operationally risky, and difficult to repeat for every architectural change. Distributed tracing and deployment metadata provide cheaper evidence, yet they usually remain descriptive: they show which services interacted, not what endpoint-level availability property follows. This paper proposes a formal runtime availability model based on stochastic connectivity for resilience-oriented analysis of microservice endpoints. It treats endpoint availability under explicit fault scenarios as a measurable facet of microservice resilience, combining a typed service-dependency graph, a replication map, a probability measure over node and edge states, and request-specific success predicates. Its semantics separates computational failures of service replicas from communication failures of logical dependencies, showing that replication cannot compensate for bottleneck dependencies. The model can be reconstructed from traces and deployment artifacts, parameterized for architectural what-if analysis, and analyzed by Monte Carlo simulation before or alongside fault injection. We define the model, its trace-to-model construction, elementary semantic properties, and a synthetic adequacy study. The study matches closed-form oracle cases within sampling error and exposes boundaries caused by edge bottlenecks, correlated failures, missing traces, and time-dependent failures.
\end{abstract}

\keywords{runtime models, stochastic models, edge reliability, availability, resilience analysis, distributed tracing, model discovery, microservices, chaos engineering}

\maketitle

\section{Introduction}

Microservice systems are routinely described as graphs of services, queues, databases, gateways, and APIs. This graph intuition is operationally useful, but it is too informal for availability reasoning. A dependency map obtained from tracing or a service mesh can show that service $a$ called service $b$, yet it does not define whether the dependency was blocking, whether at least one replica of $b$ is sufficient, whether the call matters for a particular endpoint, or which network and routing failures are included in the probability space. As a result, teams often fall back to fault injection and chaos engineering to learn availability behavior empirically \cite{basiri2016chaos,heorhiadi2016gremlin}. Such experiments are valuable, but they are costly, disruptive, and hard to repeat for every dependency or replication change.

Model-driven engineering offers a different path: make the abstraction explicit, give it semantics, and analyze the model before executing failures in the system. This is especially natural for runtime models, which abstract monitored system state and operational context for analysis and adaptation \cite{blair2009modelsruntime,giese2014uncertainty}. Microservice platforms already generate the raw material for such models through distributed traces, deployment manifests, service-mesh telemetry, and observability metadata. The missing step is a compact formal model that explains which availability question the reconstructed graph answers. From a resilience perspective, endpoint availability under explicit fault scenarios is the measurable quantity: given a fault model over replicas and communication dependencies, the model asks whether an endpoint still satisfies its success predicate.

The route taken here is synthetic: we adapt probabilistic connectivity reasoning from graph reliability to the runtime-model setting of trace-observed microservice systems, and combine it with deployment-level replication and endpoint-specific success predicates. This paper proposes a runtime availability model based on stochastic connectivity. A
microservice system is represented as
\begin{equation}
  \Model = \langle G, R, P, \Pred \rangle, \qquad G=(V,E,\tau),
\end{equation}
where $G$ is a typed service-dependency graph, $R$ maps service types to replica counts, $P$ is a probability measure over node and edge states, and $\Pred$ is a family of request-specific success predicates. The central idea is that endpoint availability is not a global property of the whole graph. It is the probability that a particular success predicate holds under a sampled state of service replicas and communication dependencies.

Prior work provides the empirical starting point. Model discovery and graph simulation showed that trace-derived service graphs and replica counts can approximate endpoint availability under independent fail-stop faults on DeathStarBench \cite{prior2026modeldiscovery}. A subsequent OpenTelemetry Demo case study showed that endpoint predicates matter: broker-mediated asynchronous work can be irrelevant for an immediate HTTP response while remaining relevant for eventual workflow completion \cite{prior2025asyncsemantics}. Those studies used empirical graph-simulation estimators and left the general runtime-model semantics implicit. This paper formalizes that semantics by defining the tuple syntax, node-and-edge state space, probability measure, request predicates, and adequacy boundaries of stochastic connectivity. It also generalizes the earlier node-oriented estimators by making logical dependency failures first-class state variables.

The paper makes four contributions: (i) a formal runtime model for resilience-oriented endpoint availability analysis under node and edge failures; (ii) a trace-to-model construction that maps distributed traces and deployment metadata to the model components; (iii) elementary semantic properties, including estimator unbiasedness, monotonicity, path reliability with edge states, replication limits under edge bottlenecks, and invariance of immediate HTTP success under non-blocking asynchronous edges; and (iv) a synthetic adequacy study with compact oracle families, large-family sanity checks, and boundary cases for correlation, timing, and partial observability.

\section{Related Work}

\textbf{Runtime and quality models.} Models@run.time research treats models as causally connected abstractions of executing systems, used for monitoring, analysis, adaptation, and assurance \cite{blair2009modelsruntime,giese2014uncertainty,bencomo2019guidedtour}. This places stochastic connectivity in the runtime-model tradition: traces, deployment metadata, and telemetry provide the runtime evidence, while the model supplies the semantics needed for analysis. Architecture-based self-adaptation systems such as Rainbow show that explicit architectural models can guide runtime decisions when monitored conditions change \cite{garlan2004rainbow}; broader self-adaptation roadmaps make assurance and uncertainty central concerns \cite{cheng2009roadmap}. Architecture-level quality prediction is represented by approaches such as the Palladio Component Model \cite{becker2009palladio} and architecture-based reliability prediction \cite{gokhale2007architecture,brosch2012palladio}. Availability itself belongs to the broader dependability vocabulary, where service delivery, faults, errors, and failures must be distinguished explicitly \cite{avizienis2004dependability}. Stochastic connectivity targets a smaller object: endpoint availability under explicit node, edge, and predicate semantics, reconstructed from runtime evidence.

\textbf{Network and architecture reliability.} Classical network reliability studies probabilistic connectivity of graphs under node or edge failures \cite{colbourn1987networkreliability}. Exact reliability computation is computationally difficult in general \cite{valiant1979complexity}, motivating randomized approximation schemes and Monte Carlo estimators \cite{karger1995fpras}. Software-architecture research similarly studies reliability and availability prediction from architectural structure, component reliabilities, usage profiles, and deployment assumptions \cite{gokhale2007architecture,immonen2008survey,brosch2012palladio}. Stochastic connectivity combines these perspectives for trace-observed microservice systems. A microservice dependency graph is not only a communication graph: vertices denote service types with replicas, edges denote typed execution dependencies, and the property of interest is request-specific success rather than all-terminal connectivity. This is why the model includes $R$ and $\Pred$, not only $G$ and edge probabilities.

\textbf{Microservice resilience and observability.} Microservices introduce independently deployable components, decentralized ownership, and dense cross-service interactions \cite{dragoni2017microservices,pahl2016mapping,zimmermann2017tenets}. Practitioner-oriented and grey-literature studies report both the benefits of this style and recurring operational costs around monitoring, testing, deployment, and failure handling \cite{soldani2018pains}. Fault-injection systems such as Gremlin demonstrate the practical need for resilience testing at the message-exchange and network layer \cite{heorhiadi2016gremlin}; chaos engineering generalizes this practice to controlled experiments on production-like systems \cite{basiri2016chaos}. Older studies of Internet service outages already showed that failures are not limited to crashed components; configuration, front-end, and operational failures are often central to service unavailability \cite{oppenheimer2003internetservices}. Benchmarks such as DeathStarBench provide realistic microservice applications for studying these effects \cite{gan2019deathstarbench}. At the same time, distributed tracing systems such as Dapper \cite{sigelman2010dapper}, OpenTelemetry semantic conventions \cite{opentelemetrysemconv}, and OpenTelemetry-based tooling produce execution evidence from which service dependencies can be reconstructed; the OpenTelemetry Demo is a representative instrumented application \cite{opentelemetrydemo}, and industrial surveys confirm the role of tracing in microservice analysis \cite{li2022observability}. The gap addressed here is semantic: observability data describes what happened, while stochastic connectivity defines what availability claim can be analyzed from it.

\section{Motivating Example}

Consider an endpoint $u$ served by a frontend $f$. The request synchronously calls a catalog service $c$ and a payment service $p$, while the payment service asynchronously publishes an event to a broker $k$ for later processing by an email service $m$. A monitoring tool can reconstruct the edges $(f,c)$, $(f,p)$, $(p,k)$, and $(k,m)$ from traces, but availability depends on the success criterion. For an immediate HTTP response, the event publication may be non-blocking: the request can succeed if $f$, $c$, and $p$ are live and reachable through live synchronous dependencies. For eventual workflow completion, the broker and the email consumer may become mandatory.

Now add a second ambiguity: even if $p$ has many replicas, the logical dependency $(f,p)$ can fail because of a load balancer, service discovery, service-mesh route, DNS problem, or network partition. A node-only state space treats the dependency as perfectly reliable whenever both endpoints have at least one live replica. This can systematically overestimate endpoint availability. The model must therefore represent both computational failures of replicas and communication failures of logical edges.

This example illustrates why a plain dependency graph is underspecified. The same graph supports multiple availability questions, and the same replicated service can be protected against node failures while remaining exposed to edge failures. A model must therefore represent dependency types, request-specific predicates, and edge reliability, not only service adjacency.

\section{Formal Stochastic Connectivity Model}

Let $V$ be a finite set of service types. Let $E$ be a finite set of directed logical dependencies, each with a source and target in $V$; when the dependency key is immaterial, write $e=(v,w)$ for a dependency from $v$ to $w$. The typing function
\begin{equation}
  \tau : E \rightarrow \{\Sync,\Async\}
\end{equation}
classifies dependencies as blocking or non-blocking for the success semantics currently under analysis. The classification is semantic rather than purely technological: a message-broker edge can be non-blocking for immediate HTTP success but relevant for an eventual-completion predicate.

The replication map $R:V\rightarrow\mathbb{N}_{>0}$ gives the number of runtime replicas for each service type. Let
\begin{equation}
  I_R = \{(v,i) \mid v\in V,\ 1\leq i\leq R(v)\}
\end{equation}
enumerate replicas. The system state space includes replica states and logical edge states:
\begin{equation}
  \States = \States_V \times \States_E
           = \{0,1\}^{I_R} \times \{0,1\}^{E}.
  \label{eq:state-space}
\end{equation}
For $\omega=(x,\epsilon)\in\States$, $x_{v,i}=1$ means that replica $i$ of service $v$ is operational, while $\epsilon_e=1$ means that logical dependency $e$ is operational. A service type is live if at least one of its replicas is live:
\begin{equation}
  L_v(\omega)=\mathbf{1}\left[\sum_{i=1}^{R(v)} x_{v,i} \geq 1\right].
  \label{eq:live-service}
\end{equation}

Equation~\eqref{eq:live-service} is the default stateless $1$-out-of-$R(v)$ semantics; stateful or quorum-backed service groups replace it by $L^{k_v}_v(\omega)=\mathbf{1}[\sum_{i=1}^{R(v)}x_{v,i}\ge k_v]$, with $k_v=1$ in the synthetic replicated-bottleneck families below.

The probability measure $P$ over $\States$ defines the fault model. In the most general formulation, $P$ is arbitrary: it may encode independent failures, correlated failure domains, shared load balancers, zone failures, or empirically observed dependencies. The baseline instantiation used for architectural what-if analysis factors node and edge failures:
\begin{equation}
  P(\omega) = P_{\Node}(x) \cdot P_{\Edge}(\epsilon),
  \label{eq:product-node-edge}
\end{equation}
where
\begin{equation}
  P_{\Node}(x)=\prod_{(v,i)\in I_R}\theta_{v,i}^{x_{v,i}}(1-\theta_{v,i})^{1-x_{v,i}},
  \label{eq:pnode}
\end{equation}
\begin{equation}
  P_{\Edge}(\epsilon)=\prod_{e\in E}\rho_e^{\epsilon_e}(1-\rho_e)^{1-\epsilon_e}.
  \label{eq:pedge}
\end{equation}
Here $\theta_{v,i}$ is the live probability of replica $(v,i)$, and $\rho_e$ is the live probability of dependency $e$. The product form is a baseline parameterization; dependent failures are represented by replacing $P$, without changing the syntax of $G$, $R$, or $\Pred$.

For a state $\omega$ and edge-mode set $B\subseteq\{\Sync,\Async\}$, define the live $B$-subgraph
\begin{equation}
  G^{B}_{\omega}=(V_{\omega},E^{B}_{\omega}),
  \label{eq:live-mode-graph}
\end{equation}
where
\begin{equation}
  V_{\omega}=\{v\in V\mid L_v(\omega)=1\},
\end{equation}
\begin{equation}
\begin{aligned}
  E^{B}_{\omega}=\{e=(v,w)\in E\mid{}& \tau(e)\in B,\ L_v(\omega)=1,\\
                                  & L_w(\omega)=1,\ \epsilon_e=1\}.
\end{aligned}
\end{equation}
We write $G^{\Sync}_{\omega}$ for $G^{\{\Sync\}}_{\omega}$. Thus, a usable dependency requires live source and target services and a live logical edge, explicitly separating replica redundancy from communication reliability.

A request class $u$ is modeled by an entry service $s_u\in V$, a mandatory target set $T_u\subseteq V$, and a success predicate $\Phi_u\in\Pred$. The canonical immediate-response predicate is
\begin{equation}
  \Phi^{\Imm}_u(\omega)=\mathbf{1}\left[\forall t\in T_u:\ \Reach_{G^{\Sync}_{\omega}}(s_u,t)\right].
  \label{eq:canonical-predicate}
\end{equation}
For eventual-completion semantics, let $T^{\Evt}_u\subseteq V$ be the mandatory completion targets and $B^{\Evt}_u\subseteq\{\Sync,\Async\}$ the dependency modes relevant to completion. The corresponding predicate is
\begin{equation}
  \Phi^{\Evt}_u(\omega)=\mathbf{1}\left[\forall t\in T^{\Evt}_u:\ \Reach_{G^{B^{\Evt}_u}_{\omega}}(s_u,t)\right].
  \label{eq:eventual-predicate}
\end{equation}
The endpoint availability is
\begin{equation}
  A_u = P(\Phi_u(\omega)=1)=\mathbb{E}_{\omega\sim P}[\Phi_u(\omega)].
  \label{eq:availability}
\end{equation}
When exact enumeration is too expensive, availability is estimated by Monte Carlo sampling:
\begin{equation}
  \widehat{A}^{(N)}_u=\frac{1}{N}\sum_{j=1}^{N}\Phi_u(\omega^{(j)}), \quad \omega^{(j)}\sim P.
  \label{eq:mc}
\end{equation}
Equation~\eqref{eq:mc} is deliberately simple. Reachability itself is inexpensive; the modeling question is whether the state space, probability measure, typed edge semantics, and request predicate faithfully capture the availability property being asked.

Table~\ref{tab:model-components} summarizes the modeling roles of the tuple components. The tuple is intentionally minimal. Removing $R$ collapses service types and runtime replicas, making redundancy invisible. Removing edge states silently assumes perfect communication. Removing $\tau$ conflates blocking calls with non-blocking side effects. Removing $\Phi$ turns availability into an underspecified graph-global property rather than an endpoint-level property. The model therefore exposes the smallest set of semantic choices that must be fixed before a trace-derived graph can support availability reasoning.

\begin{table}[t]
\caption{Model components and their intended runtime interpretation.}
\label{tab:model-components}
\centering
\scriptsize
\setlength{\tabcolsep}{3pt}
\begin{tabular}{@{}p{0.16\linewidth}p{0.34\linewidth}p{0.39\linewidth}@{}}
\toprule
Component & Runtime source & Semantic role \\
\midrule
$G=(V,E,\tau)$ & Traces, service mesh, protocol tags & Service dependencies and blocking/non-blocking edge meaning \\
$R$ & Deployment manifests, instance IDs & Redundancy of service types through runtime replicas \\
$\States_V$ & Health checks, restarts, fault scenarios & Computational availability of service replicas \\
$\States_E$ & Network errors, routing, mesh telemetry & Availability of logical communication dependencies \\
$P$ & What-if parameters or empirical estimates & Fault model over nodes, edges, and correlations \\
$\Phi_u$ & Endpoint semantics, analyst specification & Request-specific definition of success \\
\bottomrule
\end{tabular}
\end{table}

\section{Trace-to-Model Construction and Parameterization}

Let $\mathcal{T}=\{t^{(1)},\ldots,t^{(m)}\}$ be a set of distributed traces, where each trace is parsed as a tree or DAG of spans. Parent-child relations between spans are projected to logical dependencies keyed by source service, target service, operation/protocol, and span kind. Repeated observations with the same key are aggregated into $E$; if the same service pair appears in both blocking and non-blocking modes, the modes remain distinct logical dependencies. Service names, span attributes, peer service attributes, RPC system tags, messaging system tags, and deployment metadata provide candidate labels for vertices and edges. OpenTelemetry semantic conventions provide a common vocabulary for trace attributes, span names, and operation kinds \cite{opentelemetrysemconv}. The function $\tau$ is inferred from protocol and span semantics: HTTP and RPC interactions are treated as synchronous unless the endpoint semantics says otherwise; messaging interactions, including broker-mediated interactions, are treated as asynchronous for immediate-response predicates.

The replication map $R$ can be obtained from deployment manifests, orchestrator APIs, or distinct runtime instance identifiers observed in telemetry. The edge variables $\epsilon_e$ do not require tracing to observe every individual packet. They abstract the operational success of a logical dependency: service discovery, routing, mesh policy, load balancer behavior, network reachability, and protocol-level availability. This makes edge reliability a first-class part of the model rather than an implicit perfect-network assumption.

The measure $P$ can be used in two modes. In \emph{architectural what-if analysis}, the researcher or engineer chooses $\theta_{v,i}$ and $\rho_e$ as free parameters to explore sensitivity of an endpoint to node and edge reliability. This is the mode used in the synthetic tests. In \emph{empirical baseline estimation}, $\theta_{v,i}$ can be estimated from container restarts, readiness failures, or health-check histories, while $\rho_e$ can be estimated from trace-level communication errors such as HTTP 503/504, gRPC \texttt{UNAVAILABLE}, service-mesh telemetry, or observed timeout rates; for product-baseline fitting, $\rho_e$ denotes dependency usability conditional on live source and target service groups, so samples during known source/target node-failure intervals are censored or represented by a non-product $P$. The tuple semantics is independent of the parameter-estimation method: manual what-if parameters, telemetry-derived rates, and non-product failure-domain models all instantiate the same $\langle G,R,P,\Phi\rangle$ structure.

Edges in $E$ encode observed logical dependencies; $\Phi_u$, not $E$ alone, determines which dependencies and targets are mandatory for a request class. A checkout endpoint may be successful when it returns HTTP 200, when payment is committed, or when all downstream events have eventually been processed. The compact predicate assumes a semantically homogeneous request class; endpoints with conditional downstream logic are split into scenarios $c$ with separate predicates, or summarized as a workload mixture $A_u=\sum_c \pi_{u,c}\mathbb{E}[\Phi_{u,c}(\omega)]$. These are different predicates on the same reconstructed graph. Making $\Phi_u$ explicit prevents the model from confusing immediate endpoint availability with end-to-end business completion.

\section{Analysis Properties}

The following properties establish consistency checks for the semantics. They make the model falsifiable: if an implementation violates them on the corresponding synthetic families, either the implementation or the predicate definition is wrong.

\begin{proposition}[Unbiased Monte Carlo estimator]
For any request predicate $\Phi_u$ and any probability measure $P$ over $\States$, $\mathbb{E}[\widehat{A}^{(N)}_u]=A_u$.
\end{proposition}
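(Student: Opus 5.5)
The plan is to apply linearity of expectation to the estimator~\eqref{eq:mc}. The only thing to check is that each summand $\Phi_u(\omega^{(j)})$ is an integrable random variable whose expectation equals $A_u$.

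First I make the probabilistic setting explicit. The state space $\States$ in~\eqref{eq:state-space} is finite, since $I_R$ and $E$ are finite. So $P$ is a probability measure on the power set of $\States$, and every function $\States\to\{0,1\}$ is measurable. In particular, $\Phi_u$ is a bounded, $\{0,1\}$-valued random variable. This holds for the canonical predicates~\eqref{eq:canonical-predicate} and~\eqref{eq:eventual-predicate}, and equally for any other member of $\Pred$ and any mixture scenario predicate, so the argument does not depend on the choice of predicate.

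Next, each sample $\omega^{(j)}$ is drawn with law $P$. By the change-of-variables formula on the finite space,
\[
\mathbb{E}\bigl[\Phi_u(\omega^{(j)})\bigr]=\sum_{\omega\in\States}P(\omega)\,\Phi_u(\omega)=P(\Phi_u(\omega)=1)=A_u ,
\]
which is exactly~\eqref{eq:availability}. Linearity of expectation then gives
\[
\mathbb{E}\bigl[\widehat{A}^{(N)}_u\bigr]=\frac{1}{N}\sum_{j=1}^{N}\mathbb{E}\bigl[\Phi_u(\omega^{(j)})\bigr]=\frac{1}{N}\cdot N A_u=A_u .
\]

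There is no real obstacle here. The one point to state carefully is that unbiasedness needs only that each $\omega^{(j)}$ has marginal law $P$. It does not need independence between samples, and it does not need the product form~\eqref{eq:product-node-edge}. I will say this explicitly, because the proposition is claimed for arbitrary $P$, including correlated failure-domain measures. Independence would matter only for the variance bound $A_u(1-A_u)/N$, which I can mention as a remark but which is not part of the statement.
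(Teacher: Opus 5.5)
Your proof is correct and follows the paper's own argument: each summand $\Phi_u(\omega^{(j)})$ is a Bernoulli variable with mean $A_u$, and linearity of expectation gives the claim. Your extra remarks are accurate refinements of that same argument: finiteness of $\States$ makes every predicate measurable, and unbiasedness needs only that each sample has marginal law $P$, not independence or the product form.
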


\begin{proof}[Proof sketch]
Each sampled value $\Phi_u(\omega^{(j)})$ is a Bernoulli random variable with expectation $A_u$. Linearity of expectation gives the claim.
\end{proof}

\begin{proposition}[Monotonicity under product live probabilities]
Assume the product baseline in Equations~\eqref{eq:product-node-edge}--\eqref{eq:pedge}. If all node live probabilities $\theta_{v,i}$ and edge live probabilities $\rho_e$ weakly increase, then $A_u$ weakly increases for every monotone reachability predicate of the form in Equation~\eqref{eq:canonical-predicate}.
\end{proposition}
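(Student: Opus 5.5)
The plan is a standard coupling argument. First, establish that $\Phi^{\Imm}_u$ is a monotone (nondecreasing) Boolean function of $\omega=(x,\epsilon)$ under the coordinatewise order on $\States$. Second, realize the product measure for both parameter vectors on a common probability space so that the higher-parameter state dominates the lower one pointwise. Monotonicity of $\Phi^{\Imm}_u$ then yields pointwise domination of the indicators, and taking expectations gives the inequality on $A_u$.

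\textbf{Monotonicity of the predicate.} Suppose $\omega\le\omega'$ coordinatewise, i.e., $x_{v,i}\le x'_{v,i}$ and $\epsilon_e\le\epsilon'_e$.
\begin{itemize}[leftmargin=*]
\item By Equation~\eqref{eq:live-service}, $L_v(\omega)\le L_v(\omega')$ for every $v$. The same holds for the $k_v$-out-of-$R(v)$ variant, since a threshold of a sum of nondecreasing terms is nondecreasing.
\item Hence $V_\omega\subseteq V_{\omega'}$.
\item Every edge in $E^{\Sync}_\omega$ has $\tau(e)=\Sync$, both endpoints live, and $\epsilon_e=1$. All three conditions persist in $\omega'$, so $E^{\Sync}_\omega\subseteq E^{\Sync}_{\omega'}$.
\item So $G^{\Sync}_\omega$ is a subgraph of $G^{\Sync}_{\omega'}$. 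Any directed path from $s_u$ to $t$ in the former is a path in the latter, so $\Reach$ is preserved.
\item Therefore $\Phi^{\Imm}_u(\omega)\le\Phi^{\Imm}_u(\omega')$.
\end{itemize}
The same argument applies verbatim to $\Phi^{\Evt}_u$ for any fixed $B$.

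\textbf{Coupling.} Let $(\theta,\rho)\le(\theta',\rho')$ componentwise. Take independent uniform variables $U_{v,i}$ and $W_e$ on $[0,1]$, and set
\[
x_{v,i}=\mathbf{1}[U_{v,i}\le\theta_{v,i}],\quad x'_{v,i}=\mathbf{1}[U_{v,i}\le\theta'_{v,i}],
\]
\[
\epsilon_e=\mathbf{1}[W_e\le\rho_e],\quad \epsilon'_e=\mathbf{1}[W_e\le\rho'_e].
\]
By independence of the uniforms, $\omega$ has law $P$ given by Equations~\eqref{eq:product-node-edge}--\eqref{eq:pedge} with parameters $(\theta,\rho)$, and $\omega'$ has the corresponding law with $(\theta',\rho')$. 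Moreover $\omega\le\omega'$ almost surely. Combining this with the monotonicity of the predicate,
\[
A_u=\mathbb{E}[\Phi_u(\omega)]\le\mathbb{E}[\Phi_u(\omega')]=A'_u.
\]
As an alternative to the coupling, one could show $\partial A_u/\partial\theta_{v,i}\ge 0$ by conditioning on all other coordinates, since $A_u$ is multilinear in the parameters. The coupling is cleaner.

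\textbf{Main obstacle.} The only step needing care is the monotonicity of the predicate. It depends on two facts: edge usability is a conjunction of nondecreasing indicators, and liveness is a threshold function. The argument would fail for non-monotone predicates, for example ones requiring some service to be down, which is why the statement restricts to the reachability form in Equation~\eqref{eq:canonical-predicate}. Independence matters only for constructing the coupling. For a non-product $P$ the claim would instead require a stochastic-domination hypothesis.
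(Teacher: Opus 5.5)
Your proposal is correct and follows essentially the same argument as the paper: couple the two product measures through shared uniforms so that the higher-parameter state dominates coordinatewise, then use that reachability in the live synchronous subgraph is preserved when live replicas and edges are added. You spell out the predicate-monotonicity step (via $V_\omega\subseteq V_{\omega'}$ and $E^{\Sync}_\omega\subseteq E^{\Sync}_{\omega'}$), which the paper's sketch states in a single line.
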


\begin{proof}[Proof sketch]
Couple two product measures with shared uniform random variables. The state sampled under larger live probabilities contains every live replica and live edge of the smaller-probability sample, possibly with additional live replicas and edges. Reachability in the live synchronous subgraph is monotone with respect to adding vertices and edges.
\end{proof}

\begin{proposition}[Path reliability with node and edge states]
Consider a single mandatory synchronous path $v_0\rightarrow v_1\rightarrow\cdots\rightarrow v_k$ with no alternative route. Under the product baseline, let $a_i=1-\prod_{r=1}^{R(v_i)}(1-\theta_{v_i,r})$ be the live probability of service $v_i$. Then the immediate availability of the path is
\begin{equation}
  A_u = \left(\prod_{i=0}^{k} a_i\right)\left(\prod_{i=0}^{k-1}\rho_{(v_i,v_{i+1})}\right).
  \label{eq:path-closed-form}
\end{equation}
\end{proposition}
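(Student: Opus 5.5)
The plan is to rewrite the success event $\{\Phi^{\Imm}_u(\omega)=1\}$ as an intersection of elementary events over replicas and edges, and then read off the probability from the product measure in Equations~\eqref{eq:product-node-edge}--\eqref{eq:pedge}. I take the setting of the statement to mean the following. The entry service is $s_u=v_0$. The mandatory target set contains $v_k$, and possibly other $v_i$, which changes nothing. The synchronous edges relevant to the request are exactly $(v_i,v_{i+1})$ for $0\le i<k$. "No alternative route" means that every synchronous path from $v_0$ to $v_k$ in $G$ is this one. Because the path is mandatory, I also read it as every $v_i$ being required.

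\textbf{Step 1: characterize the success event.} I claim that
\[
\Phi^{\Imm}_u(\omega)=1 \iff \Bigl(\forall i\le k:\ L_{v_i}(\omega)=1\Bigr)\wedge\Bigl(\forall i<k:\ \epsilon_{(v_i,v_{i+1})}=1\Bigr).
\]
For the direction ($\Leftarrow$), the definition of $E^{\Sync}_\omega$ places every edge $(v_i,v_{i+1})$ in $G^{\Sync}_\omega$, so $v_0$ reaches every $v_i$, and in particular every target. For the direction ($\Rightarrow$), reachability of $v_k$ from $v_0$ in $G^{\Sync}_\omega\subseteq G^{\Sync}$ requires a synchronous path in $G$. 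By the no-alternative-route hypothesis this path must be $v_0\to\cdots\to v_k$, so each of its edges lies in $E^{\Sync}_\omega$. That forces $\epsilon_{(v_i,v_{i+1})}=1$ and liveness of both endpoints of each edge. For $k=0$ the condition reduces to $L_{v_0}=1$, since a trivial path only exists in $G^{\Sync}_\omega$ if $v_0$ is a live vertex.

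\textbf{Step 2: factor the probability.} Under the product baseline, all coordinates $x_{v,i}$ and $\epsilon_e$ are independent Bernoulli variables. The events $\{L_{v_i}=1\}$ depend on disjoint blocks of replica coordinates, because the $v_i$ on a simple path are distinct. The edge events depend on distinct edge coordinates. The probability of the intersection therefore factors into the product of the individual probabilities. We have $P(\epsilon_e=1)=\rho_e$ directly from the edge factor. For the services,
\[
P(L_{v_i}=1)=1-P\Bigl(\textstyle\bigcap_r\{x_{v_i,r}=0\}\Bigr)=1-\prod_{r}(1-\theta_{v_i,r})=a_i,
\]
again by independence. Multiplying these factors gives Equation~\eqref{eq:path-closed-form}.

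\textbf{Main obstacle.} The algebra is routine; the real work is Step 1, and specifically the precise meaning of "no alternative route." I will state it explicitly as the assumption that the synchronous subgraph of $G$ relevant to $u$ is exactly the simple path, with distinct vertices. Distinctness is needed so that no replica block or edge variable is counted twice, which would break the factorization. I would also note that if intermediate vertices are not in $T_u$, they are still forced to be live, because reaching $v_k$ passes through them.
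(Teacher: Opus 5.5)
Your proposal is correct and follows the same route as the paper: it characterizes success as the conjunction of liveness of every service on the path and liveness of every path edge, then factors the probability by independence under the product baseline. Your version is simply more explicit, since it spells out what ``no alternative route'' means and why the path vertices must be distinct, both of which the paper's sketch leaves implicit.
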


\begin{proof}[Proof sketch]
The predicate succeeds exactly when every service on the path is live and every path edge is live. Under the product baseline these events are independent, giving the product.
\end{proof}

Equation~\eqref{eq:path-closed-form} shows that the availability of a synchronous chain degrades as the product of service-group live probabilities and edge live probabilities. Long mandatory chains are therefore fragile to both computational and communication faults, even when each individual component appears highly reliable.

\begin{proposition}[Replication does not remove edge bottlenecks]
If every successful realization of $\Phi^{\Imm}_u$ requires a particular logical edge $e$ to be live, then $A_u\leq P(\epsilon_e=1)$. Under the product baseline, $A_u\leq\rho_e$ regardless of the replica counts of the edge target.
\end{proposition}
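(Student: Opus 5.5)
The plan is to show an inclusion of events and then use monotonicity of $P$. First I will read the hypothesis as a statement about states: every $\omega\in\States$ with $\Phi^{\Imm}_u(\omega)=1$ also satisfies $\epsilon_e=1$. This gives the containment
\begin{equation*}
  \{\omega\in\States \mid \Phi^{\Imm}_u(\omega)=1\}\subseteq\{\omega\in\States \mid \epsilon_e=1\}.
\end{equation*}
Next I will use Equation~\eqref{eq:availability} together with the fact that a probability measure is monotone under inclusion. This yields $A_u=P(\Phi^{\Imm}_u=1)\le P(\epsilon_e=1)$ for an arbitrary measure $P$ on $\States$. No independence assumption is needed at this stage.

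For the product baseline I will compute the marginal of $\epsilon_e$ under Equations~\eqref{eq:product-node-edge}--\eqref{eq:pedge}. I would sum $P(\omega)$ over every node state $x$ and every edge coordinate other than $e$. Each factor of $P_{\Node}$ sums to $1$, and so does each factor $\rho_{e'}^{\epsilon_{e'}}(1-\rho_{e'})^{1-\epsilon_{e'}}$ with $e'\neq e$. What remains is $P(\epsilon_e=1)=\rho_e$.

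The key point is that this marginal does not involve any $\theta_{v,i}$ and does not depend on the index set $I_R$. Changing $R(w)$ for the edge target $w$ only enlarges the node factor $\States_V$, and that factor is summed out. Hence the bound $A_u\le\rho_e$ holds uniformly in $R$.

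I do not expect a serious obstacle, since the argument is short. The one step that needs care is interpreting ``every successful realization requires $e$'' precisely. I would state it as the set containment above. I would also justify it structurally for the typical case where $e$ is a cut edge of the synchronous graph separating $s_u$ from some $t\in T_u$. In that case, if $\epsilon_e=0$ then $e\notin E^{\Sync}_\omega$ by the definition of the live subgraph. No other $s_u$--$t$ path exists in $G^{\Sync}_\omega\subseteq G$, so $\Phi^{\Imm}_u(\omega)=0$.

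Finally, I will note in a remark that the bound can be tight. This happens when all other components are perfectly reliable, which matches the single-edge case of Equation~\eqref{eq:path-closed-form}. It shows that letting $R(w)\to\infty$ can drive $a_w\to1$ but can never push $A_u$ above $\rho_e$.
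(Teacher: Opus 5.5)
Your proposal is correct and matches the paper's proof: both rest on the inclusion $\{\Phi^{\Imm}_u=1\}\subseteq\{\epsilon_e=1\}$ and monotonicity of $P$. You also spell out several points the paper leaves implicit: the product-baseline marginal $P(\epsilon_e=1)=\rho_e$, which is independent of $R$, the cut-edge justification of the hypothesis, and the tightness remark.
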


\begin{proof}[Proof sketch]
Success implies $\epsilon_e=1$, so $\{\Phi_u=1\}\subseteq\{\epsilon_e=1\}$. Taking probabilities gives the bound.
\end{proof}

\begin{proposition}[Asynchronous invariance for immediate response]
Let $G_1$ and $G_2$ have the same services, replication map, probability measure restricted to synchronous edges and replica states, entry service, and mandatory targets for request $u$, but differ only in asynchronous edges. Then $A_u$ is identical for both models under $\Phi^{\Imm}_u$.
\end{proposition}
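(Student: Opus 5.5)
The plan is to show that $\Phi^{\Imm}_u$ factors through a projection of the state that discards asynchronous edge states. Then $A_u$, as an expectation of that function, depends only on the pushforward of $P$ under the projection, and this pushforward is assumed equal for the two models.

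\textbf{Step 1 (state decomposition).} For $j=1,2$, let $E_j$ be the edge set of $G_j$. Write $E_j=E^{\Sync}\cup E^{\Async}_j$, where the synchronous part $E^{\Sync}$ (with its typing) is common to both models by hypothesis. The state space $\States^{(j)}=\{0,1\}^{I_R}\times\{0,1\}^{E^{\Sync}}\times\{0,1\}^{E^{\Async}_j}$ then carries a projection
\[
\pi_j(x,\epsilon)=(x,\epsilon|_{E^{\Sync}})
\]
onto the common space $\States^{\ast}=\{0,1\}^{I_R}\times\{0,1\}^{E^{\Sync}}$. The index set $I_R$ is the same for both models because $V$ and $R$ coincide.

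\textbf{Step 2 (the predicate ignores asynchronous coordinates).} From the definitions, $L_v(\omega)$ depends only on $x$. With $B=\{\Sync\}$, the set $E^{\Sync}_\omega$ contains only edges with $\tau(e)=\Sync$, and membership is decided by $L_v$, $L_w$ and $\epsilon_e$ for those edges alone. Hence $G^{\Sync}_\omega$ is a function of $\pi_j(\omega)$, and it is the same function for $j=1,2$. The reachability condition in Equation~\eqref{eq:canonical-predicate} uses only $G^{\Sync}_\omega$, $s_u$ and $T_u$, all of which are shared. So there is a single function $\Psi:\States^{\ast}\to\{0,1\}$ with $\Phi^{\Imm}_u=\Psi\circ\pi_j$ in both models.

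\textbf{Step 3 (change of variables).} We have
\[
A^{(j)}_u=\mathbb{E}_{P_j}[\Psi\circ\pi_j]=\mathbb{E}_{P_j\circ\pi_j^{-1}}[\Psi].
\]
The hypothesis that the probability measures agree when restricted to synchronous edges and replica states means exactly $P_1\circ\pi_1^{-1}=P_2\circ\pi_2^{-1}$, so the two availabilities coincide. Under the product baseline this is immediate: the marginal on $\States^{\ast}$ is the product over replicas and synchronous edges, with the same $\theta$ and $\rho$ values.

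The main obstacle is interpretive rather than technical: "restricted to" has to be read as the marginal (pushforward) under $\pi_j$, since a conditional or pointwise restriction would not be well defined when the asynchronous coordinates differ. A second subtlety is that a service pair may carry both a synchronous and an asynchronous dependency. This is harmless because the trace-to-model construction keeps the two modes as distinct keys in $E$, so asynchronous edges never change $E^{\Sync}$. I would state both points explicitly before running Steps 1--3.
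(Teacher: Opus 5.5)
Your proposal is correct and follows the paper's argument: $\Phi^{\Imm}_u$ is evaluated only on $G^{\Sync}_{\omega}$, which is determined by the replica states and the synchronous edge states, so identical marginals on those coordinates give identical availability. Your projection and pushforward formulation makes rigorous the paper's implicit reading of ``restricted to'' as the marginal on $\{0,1\}^{I_R}\times\{0,1\}^{E^{\Sync}}$, and your remark that a coexisting synchronous edge on the same service pair stays a distinct key in $E$ is a sound side check.
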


\begin{proof}[Proof sketch]
The predicate $\Phi^{\Imm}_u$ is evaluated only on $G^{\Sync}_{\omega}$. Since the live synchronous subgraph is identical in both models for every shared synchronous state, the predicate value and its expectation are identical.
\end{proof}

These properties clarify the scope of the abstraction. They also motivate the synthetic evaluation: chains should follow Eq.~\eqref{eq:path-closed-form}; replicated bottlenecks should saturate when edge reliability dominates; and asynchronous broker edges should not affect immediate HTTP availability unless the success predicate requires eventual completion.

\section{Synthetic Adequacy Analysis}
\label{sec:synthetic}

The synthetic evaluation is an adequacy analysis over controlled graph families. It checks semantic invariants with closed-form oracles, then varies assumptions that are difficult to isolate cleanly in real systems, such as failure correlation, timeouts, retries, and trace incompleteness. This complements benchmark evidence by making each semantic variable observable in isolation.

The generator produces three classes of scenarios. First, \emph{oracle sanity tests} use graph families with closed-form expectations: synchronous chains with node and edge reliability, fan-out dependencies, replicated bottlenecks, and single points of failure. Second, \emph{semantic contrast tests} evaluate the same topology under different predicates, especially immediate HTTP response versus eventual workflow completion. Third, \emph{boundary tests} compare the baseline independent fail-stop model with scenarios that require richer semantics, such as correlated failure domains, timeout-induced failures, retry policies, and trace incompleteness. The new edge-state dimension enables an additional adequacy question that is not visible in a node-only model: when does improving replica availability stop helping because communication reliability has become the dominant bottleneck?

\begin{table*}[t]
\caption{Synthetic adequacy results. Compact oracle Monte Carlo checks use $N=200{,}000$ samples per validation cell; large-family sanity checks are summarized in the text.}
\label{tab:synthetic-plan}
\centering
\scriptsize
\setlength{\tabcolsep}{3pt}
\begin{tabular}{@{}p{0.13\textwidth}p{0.19\textwidth}p{0.37\textwidth}p{0.24\textwidth}@{}}
\toprule
Family & Varied parameter & Adequacy question & Result \\
\midrule
Synchronous chain & Length $k$, node live probability $\theta$, edge live probability $\rho$ & Does simulation match the node+edge closed form in Eq.~\eqref{eq:path-closed-form}? & Closed form matched; max MC error $0.0014$ \\
Fan-out & Branching degree $d$, edge reliability $\rho$ & How quickly do mandatory AND dependencies degrade availability? & $d=1\rightarrow12$ drops $0.894\rightarrow0.364$ \\
Replicated bottleneck & Replica count $R(v)$, fixed $\rho_e$ & Does replication help against node failures but saturate at an edge bottleneck? & $R=1\rightarrow12$ rises $0.738\rightarrow0.820$, bounded by $\rho=0.82$ \\
Node vs. edge sensitivity & Grid over node live probability $\theta$ and edge live probability $\rho$ & When does edge reliability dominate node replication? & Max replication gain $0.373$; edge-dominant cells $94\%$ \\
Async side effect & Immediate $\Phi^{\Imm}_u$ vs. eventual $\Phi^{\Evt}_u$ & Are async edges invariant only for immediate response? & Immediate $\Delta=0$; eventual lower by $0.289$ \\
Failure domain & Co-location and domain failure probability & What gap appears between product and correlated $P$? & Product overestimates reference by $0.094$ \\
Timeout/hang & Delay, timeout, retry count & Where does pure connectivity produce false positives? & Max false positive $0.399$; retries recover $0.318$ \\
Trace incompleteness & Edge/instance omission rate & What bias follows from partial model discovery? & Max absolute bias $0.594$; hidden dependencies overestimate by $0.118$ \\
\bottomrule
\end{tabular}
\end{table*}

\begin{figure}[t]
\centering
\includegraphics[width=\linewidth]{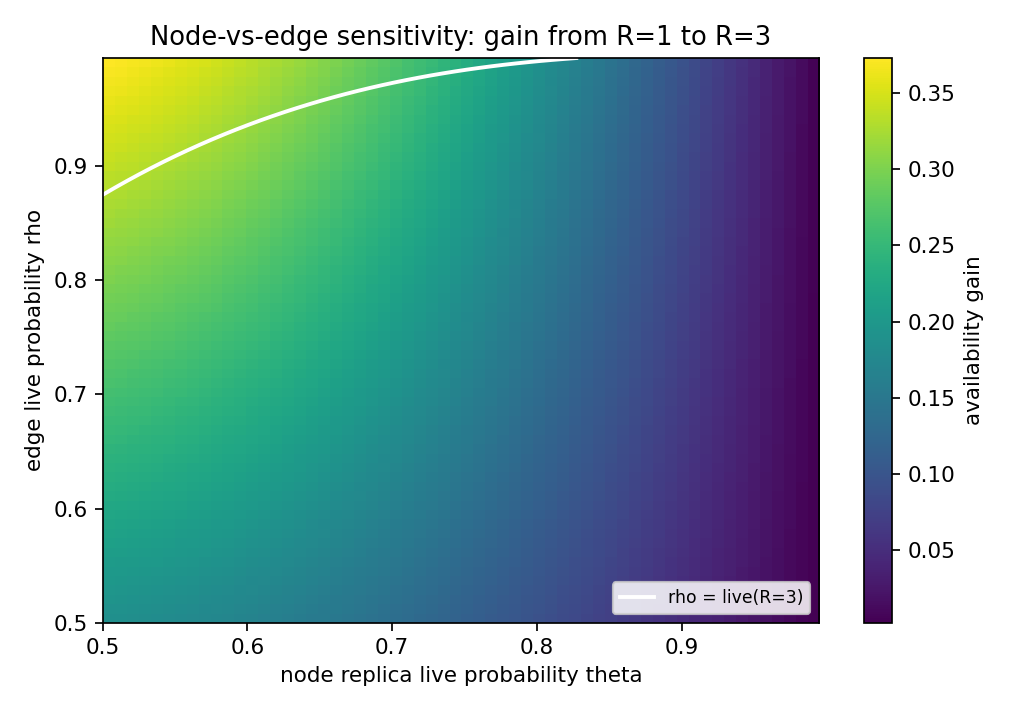}
\caption{Node-versus-edge sensitivity. Heatmap color is the exact availability gain from increasing the target service from one to three replicas; the white contour marks $\rho$ equal to the replicated service live probability $1-(1-\theta)^3$. Below this contour, the required edge is less reliable than the replicated service group and can dominate endpoint availability.}
\Description{Heatmap showing availability gain from increasing a target service from one to three replicas as a function of node live probability and edge live probability. A contour marks where edge reliability equals the live probability of a three-replica service group.}
\label{fig:synthetic-boundaries}
\end{figure}

The oracle families support the semantic consistency of the implementation. The synchronous-chain family matched the node-and-edge closed form with maximum Monte Carlo error $0.0014$, and the exact availability values ranged from $0.1668$ to $0.9703$ over the evaluated grid. Mandatory fan-out degraded multiplicatively: at $\theta=0.97$ and $\rho=0.95$, increasing the fan-out degree from $1$ to $12$ reduced availability from $0.894$ to $0.364$. The replicated-bottleneck family approached but did not exceed the required edge reliability: increasing target replicas from $1$ to $12$ raised availability from $0.738$ to $0.820$, saturating at $\rho=0.82$. These results instantiate Proposition~4 on generated topologies.

A large-family sanity layer extends the oracle families beyond the compact Table~\ref{tab:synthetic-plan} grid. Synchronous chains were generated up to $257$ services and $256$ required edges, fan-out graphs up to $129$ services and $128$ required edges, and replicated bottlenecks up to $64$ target replicas. Closed-form values preserved the same qualitative regimes: long mandatory paths and fan-outs decayed multiplicatively, while replicated bottlenecks saturated at the required edge reliability. The smallest large-family availability was $0.00577$. Selected sentinel cells were also checked by Monte Carlo sampling with $N=50{,}000$; the maximum absolute error was $0.0020$, within the maximum $95\%$ confidence half-width of $0.0044$.

The semantic and boundary tests use artifact-fixed reference models. The asynchronous side-effect scenario compares $\Phi^{\mathrm{imm}}_u$ with $\Phi^{\mathrm{evt}}_u$ on the same topology, giving immediate $\Delta=0$ and eventual loss $0.289$. In the failure-domain case, the product baseline is $A_{\mathrm{prod}}=1\cdot(1-0.1^3)\cdot0.95=0.94905$, while the correlated reference ties all three target replicas to one $Z\sim\mathrm{Bern}(0.90)$, giving $A_{\mathrm{corr}}=1\cdot0.90\cdot0.95=0.855$ and gap $0.094$. In the timeout cases, a two-edge chain with connectivity $C=0.98^3\cdot0.97^2=0.88557$ is compared with $C[1-(1-\lambda_T)^{r+1}]$, where $\lambda_T\in\{0.55,0.78,0.93\}$ for 50/100/200 ms and $r\in\{0,1,2\}$; this yields max false positive $0.399$ and retry recovery $0.318$. Trace-incompleteness cases compare the full generated fan-out and replica models with expected estimates after independent omissions $q\in\{0,0.10,0.20,0.40\}$, giving max absolute bias $0.594$ and hidden-dependency overestimation $0.118$. Overall, the synthetic study checks the intended semantics on oracle families and identifies where the baseline product connectivity model requires richer probability or temporal semantics.

\section{Empirical Basis}

The formal model is grounded in real-system case studies and in broader evidence that microservice traces expose service dependencies suitable for analysis \cite{sigelman2010dapper,li2022observability}. On DeathStarBench Social Network \cite{gan2019deathstarbench}, prior graph simulation based on service dependencies and replica counts showed strong linear covariation with fault injection (Pearson $r\approx0.992$), and its aggregate table implies condition-level MAE $0.025$ and RMSE $0.049$ across the ten mode/failure-rate cells \cite{prior2026modeldiscovery}. These results motivate trace-derived topology and replica counts as the node/replica layer of stochastic connectivity under independent fail-stop faults.

On the OpenTelemetry Demo, a trace-discovered model identified Kafka-based asynchronous edges and evaluated immediate HTTP predicates whose required targets were synchronous services \cite{prior2025asyncsemantics}. The async and all-blocking variants differed by only about $10^{-5}$ in predicted availability in that case study, illustrating the predicate distinction formalized in Proposition~5.

Together, these studies motivate the node/replica and predicate layers of $\langle G,R,P,\Phi\rangle$; Section~\ref{sec:synthetic} exercises the added edge-state layer and the correlation, timing, and observability boundaries under controlled synthetic conditions.

\section{Discussion}

The edge extension changes the interpretation of common architectural advice. Replication improves the probability that a service type is live, but it does not automatically improve the probability that a request can reach that service through a required logical dependency. In systems where traffic to a replicated service is mediated by a fragile gateway, service-mesh route, DNS entry, broker topic, or load-balancer configuration, the communication edge can dominate the endpoint availability calculation. This is the semantic reason for the node-vs-edge sensitivity experiment in Table~\ref{tab:synthetic-plan}.

The model also clarifies the relationship between stochastic connectivity and classical reliability analysis. A chain, fan-out, or bottleneck graph can be reduced to familiar product formulas, and those formulas are useful oracles for testing the implementation. Microservice endpoints add typed dependencies, replica groups, request-specific targets, and success predicates that may ignore or include asynchronous work depending on the user-level property. Stochastic connectivity is therefore a domain-specific specialization of probabilistic graph reliability for trace-observed microservice systems.

Operationally, the model can rank services and dependencies by endpoint sensitivity, making chaos experiments more targeted rather than uniformly selected.

\section{Boundary Conditions}

The first boundary is the probability measure. The product baseline in Equation~\eqref{eq:product-node-edge} is useful for sensitivity analysis, but real outages often involve correlated failures: co-located replicas, shared nodes, shared zones, shared load balancers, shared DNS, or service-mesh control-plane failures. The formal model can represent such cases through a non-product $P$, but this requires additional parameterization and validation.

The second boundary is time. The canonical predicate in Equation~\eqref{eq:canonical-predicate} is a connectivity predicate. It can overestimate availability when a path exists but violates a latency or timeout constraint. Hangs, retries, circuit breakers, queue backlog, backpressure, and indirect coupling from asynchronous work to synchronous paths require temporal or state-enriched predicates because changing from crash-only semantics to timeout semantics changes the meaning of success.

The third boundary is observability completeness. Trace sampling, missing instrumentation, partial propagation of context, and aggregated service labels can hide edges, collapse distinct instances, or merge dependencies with different operation/protocol/kind semantics. A trace-discovered model should therefore be interpreted as an observed architectural abstraction. The trace-incompleteness synthetic tests are intended to characterize how omissions bias availability estimates.

The fourth boundary is endpoint specification and workload heterogeneity. The model makes $\Phi_u$ explicit because availability depends on what the user considers successful. Immediate HTTP response, payment commitment, eventual event processing, and business-level workflow completion are different predicates. A wrong predicate can yield a numerically precise but semantically wrong estimate. A request class that merges conditional branches should therefore be split into scenario predicates or reported as a workload mixture.

\section{Artifact and Reproducibility}

The companion artifact~\cite{models2026syntheticartifact} contains a backend-independent schema for $\langle G,R,P,\Phi\rangle$, deterministic generators for the Table~\ref{tab:synthetic-plan} families, and a configuration-driven pipeline that regenerates the CSV tables, reports, and Figure~\ref{fig:synthetic-boundaries}. Each configuration fixes the graph family, parameter sweep, sample count, seed, and output metrics, so the oracle checks and boundary comparisons can be inspected from generated outputs.

\section{Future Plans}

Future work will move from controlled synthetic analysis to data-driven runtime estimation: correlated failure-domain models for shared infrastructure, telemetry-derived $\theta_{v,i}$ and $\rho_e$ parameters, and temporal predicates for latency, retries, circuit breakers, queues, and eventual completion. The same reconstructed models can then support CI and observability workflows for deployment comparison and targeted chaos selection.

\section{Conclusion}
The contribution of stochastic connectivity is primarily a synthesis of existing but usually separate methodological strands for resilience-oriented availability analysis. It brings runtime-model abstractions, trace-derived dependency discovery, probabilistic graph reliability, and endpoint-specific success semantics into one compact basis for quantifying how endpoint availability degrades under explicit fault assumptions.

Building on this basis, the paper introduced a formal runtime model for endpoint availability in microservice systems. The model combines a typed dependency graph, replication, a probability measure over node and edge states, and request-specific success predicates. Extending the state space from replicas to replicas plus logical edges removes the implicit perfect-network assumption and exposes an important architectural distinction: replication can compensate for service-instance failures, but it cannot by itself eliminate bottlenecks in required communication dependencies.

The model keeps the state space compact: service replicas, logical edges, a probability measure, and endpoint predicates are the only mandatory ingredients. This makes a common operational abstraction explicit while preserving compatibility with richer probability, temporal, and empirical parameterization layers. The synthetic adequacy study checks the expected semantics, quantifies node-versus-edge sensitivity, and identifies boundaries where richer probability or temporal semantics are required.

\clearpage
\bibliographystyle{ACM-Reference-Format}
\balance
\bibliography{references-arxiv}

\end{document}